\DeclarePairedDelimiter{\ceil}{\lceil}{\rceil}
\newtheorem{prop}{Proposition}
\ifbtasfinal\pagestyle{empty}\fi
\begin{document}

%%%%%%%%% TITLE
\title{Securing CNN Model and Biometric Template using Blockchain}

\author{Akhil Goel$^*$, Akshay Agarwal$^*$, Mayank Vatsa$^*$, Richa Singh$^*$, and Nalini Ratha$^+$\\
$^*$IIIT Delhi and $^+$IBM Research, NY, USA\\
{\tt\small $^*$\{akhil15126, akshaya, mayank, rsingh\}@iiitd.ac.in, $^+$ratha@us.ibm.com }
% For a paper whose authors are all at the same institution,
% omit the following lines up until the closing ``}''.
% Additional authors and addresses can be added with ``\and'',
% just like the second author.
% To save space, use either the email address or home page, not both
}

\maketitle
%\thispagestyle{specialfooter}

%%%%%%%%% ABSTRACT
\begin{abstract}
Blockchain has emerged as a leading technology that ensures security in a distributed framework. Recently, it has been shown that blockchain can be used to convert traditional blocks of any deep learning models into secure systems. In this research, we model a trained biometric recognition system in an architecture which leverages the blockchain technology to provide fault tolerant access in a distributed environment. The advantage of the proposed approach is that tampering in one particular component alerts the whole system and helps in easy identification of `any' possible alteration. Experimentally, with different biometric modalities, we have shown that the proposed approach provides security to both deep learning model and the biometric template.

  % With recent developments in biometrics research, face, fingerprint and iris matching systems are being employed to grant access to authorized personnel. Biometrics systems generally acquire the physical or behavioral characteristics of the individuals for identification. Due to its non-intrusive nature, face biometrics based systems are deployed to a wide range of applications such as border access, mobile e-payments, and phone unlocking. These machine learning based solutions are vulnerable at various points including feature extraction level, matching level, and template level. Recently, blockchain has emerged as a leading technology that ensures security in a distributed framework and can be used to convert traditional blocks of any machine learning systems including face and fingerprint recognition into secure systems. In this research, we model a trained biometrics recognition system in an architecture which leverages the blockchain technology based system to provide fault tolerant access in a distributed environment. The advantage of these systems is that tampering in one particular component alerts the whole system and helps in easy identification of `any' possible alteration.  
\end{abstract}
% \let\thefootnote\relax\footnotetext{\mycopyrightnotice} 
%%%%%%%%% BODY TEXT

%\cite{daugman2014}

\section{Introduction}
Recent advances in biometrics have made face, fingerprint, and iris authentication systems ubiquitous. These modalities are popular choices for large scale person identification. One example of such a scheme is India's Aadhaar project\footnote{https://uidai.gov.in/} where officials collect and store the biometric data of citizens to identify individuals at later stages. In these nation wide projects, there are two possibilities for data storage, either in a central server or in a distributed system. Storing all the collected data on a central server is reasonable in the sense that it requires the involvement of minimum people and hence chances of an external attack such as template tampering are minimum. However, this modeling choice comes at risks including system failure or an unexpected disk error that may cause some of the data to get corrupted. The other modeling option is to have a distributed system. Distributed systems ensure minimum loss in case of system failure but are hard to trust because of the number of entities involved in the process (mutual distrust between entities is a common problem in a distributed framework). An ideal solution to this conundrum is to guarantee a sense of trust and consensus in the distributed setting. This problem was first addressed by Nakamoto \cite{bitcoin} in the context of the digital currency, Bitcoin. Nakamoto proposed to replace trust with cryptographic proofs that validate a particular transaction in a blockchain. Over the years, many protocols have been developed that address the problem of achieving consensus in an unreliable and distributed environment.
\begin{figure}[t]
\begin{center}
%\fbox{\rule{0pt}{2in} \rule{0.9\linewidth}{0pt}}
   \includegraphics[width=1.0\linewidth]{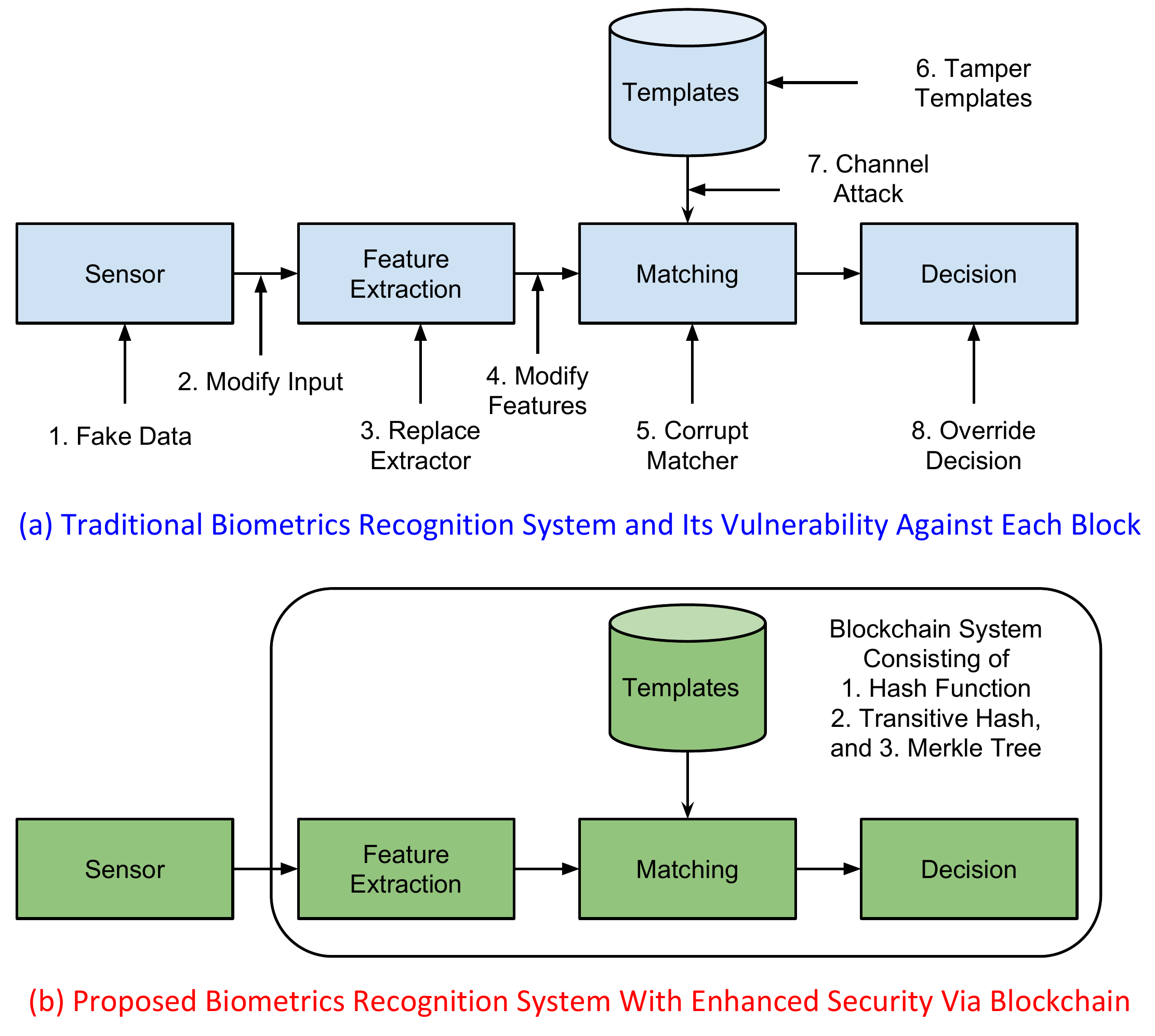}
\end{center}
\vspace{-3mm}
   \caption{Traditional biometrics recognition system which is vulnerable at multiple points and the proposed approach to secure the system.} 
%   In the proposed system each block is converted into blocks of blockchain which consist of a hash function, transitive hash pointers for connection, and Merkle tree for template protection.}
\label{fig:motive}
\vspace{-5mm}
\end{figure}

As identified by Ratha et al. \cite{ratha2001enhancing}, biometric systems are vulnerable to external attacks. They identified various attack/vulnerability levels in a typical biometrics recognition pipeline including input, feature, matching, and decision stages. Figure \ref{fig:motive} shows the basic blocks of the biometrics recognition system and points where the attacks can be performed. A lot of research work has been done in crafting and defending against adversarial attacks at input level \cite{8294186,6990726}. For example, Carlini and Wagner \cite{carlini2017towards} and Chen et al. \cite{chen2018ead} have devised adversarial attacks that produce imperceptible yet deleterious input perturbations in a white-box framework. Researchers have proposed techniques to detect and mitigate the effects of these input level perturbations on facial images \cite{agarwal2018agnostic,ijcv2019goswami,goswami2018unravelling}. Goel et al. \cite{goel2018smartbox} have developed a toolbox to benchmark adversarial detection and mitigation algorithms for face recognition. 

Recently, researchers have been analyzing the implications of incorporating decentralized blockchain technology with biometrics systems. Butchman et al. \cite{buchmann2017enhancing} have proposed the security of breeder documents through blockchain. The biometric data captured from the subject is stored on the breeder document; the hash function of the report is computed and saved as a block in the blockchain. Nandakumar et al. \cite{nandakumar2017secure} have proposed the implementation of a concept called biometrics token. The token can be used a single time, and deduplication and multiple uses of the token can be avoided through blockchain hash function. The combination of biometrics and blockchain is in its novice phase. Gracia \cite{GARCIA20185} have shown the possibility of combining biometrics through blockchain to be implemented on a distributed system.
Similarly, Delgado-Mohatar et al. \cite{Oscar2019} have presented a view of how biometrics and blockchain can be combined to benefit each other. Zhou et al. \cite{zhou2018simple} have proposed the simple, auditable scheme for authentication of the fingerprint-based recognition system. These recent works have introduced the concept of blending these technologies for better security purposes. Other than the combination of biometrics and blockchain, in different fields such as smart energy \cite{aggarwal2018energychain, magnani2018feather}, healthcare \cite{gordon2018blockchain}, and intelligent devices \cite{stokkink2018deployment} blockchain have shown some advantages.

\begin{comment}
%Recently researchers have been analysing the implications of incorporating decentralized blockchain technology with biometrics systems. Nandakumar et al. \cite{nandakumar2017secure} have proposed the implementation of blockchain to secure the deployment of single time usable biometrics token and multiple uses of the token can be avoided through blockchain hash functions. Security of chip is provided through blockchain and secret sharing algorithm. Recently, Gracia \cite{GARCIA20185} have shown the possibility of combining biometrics through blockchain to be implemented on a distributed system. Similarly, Delgado-Mohatar et al. \cite{Oscar2019} have presented the view of how biometrics and blockchain can be combined to benefit each other. If the biometrics information stored as a template in the system is stolen, the intruder can use it life time for illegal access. Zhou et al. \cite{zhou2018simple} have proposed the simple auditable scheme for the authentication of the fingerprint-based recognition system. The combination of biometrics and blockchain is in its novice phase, and minimal work has been done so far.
\end{comment}

Progressing the efforts to enhance security and attaining consensus in a suspicious environment, in this research, we present a biometric recognition architecture which uses a private blockchain to extract features.  The extracted features are passed to a Merkle-tree type structure which performs template matching in a decentralized yet secure environment. The salient contribution of this research is the development of a secure deployable system which uses the parameters of a learned model and an amalgamation of biometrics and blockchain technology which is not only auditable but also transparent in its decision making task. 

\section{Blockchain}  Blockchain is a combination of blocks where each block consists of data and metadata such as the hash of the data and pointer to the hash of the previous block. The essential primitives of the blockchain include cryptography, hash function, transitive hash pointers, and digital signatures.  Cryptography provides the way for secure communication in the presence of an adversary which can steal the sensitive data stored in the device.  The hash function helps in mapping the data of arbitrary size to a fixed size of encrypted data. The significant properties of a hash function which makes them desirable for implementation are: (i) collision avoidance: two different inputs cannot produce the same output and (ii) deterministic hiding: the random encrypted data will match to the corresponding information even if they look entirely random. These are the properties which make the blockchain tamper-resistant. The transitive hash function helps in identifying the locations of possible changes in the data. The transitive nature can be seen in Figure \ref{fig:thash}. Due to the connection of blocks through the hash function, any modification in the data will lead to the change in the hash function of the data and ultimately the final hash. The hash pointers also ensure the integrity of the `ledger'.

\begin{figure}[t]
\begin{center}
%\fbox{\rule{0pt}{2in} \rule{0.9\linewidth}{0pt}}
   \includegraphics[width=1.0\linewidth]{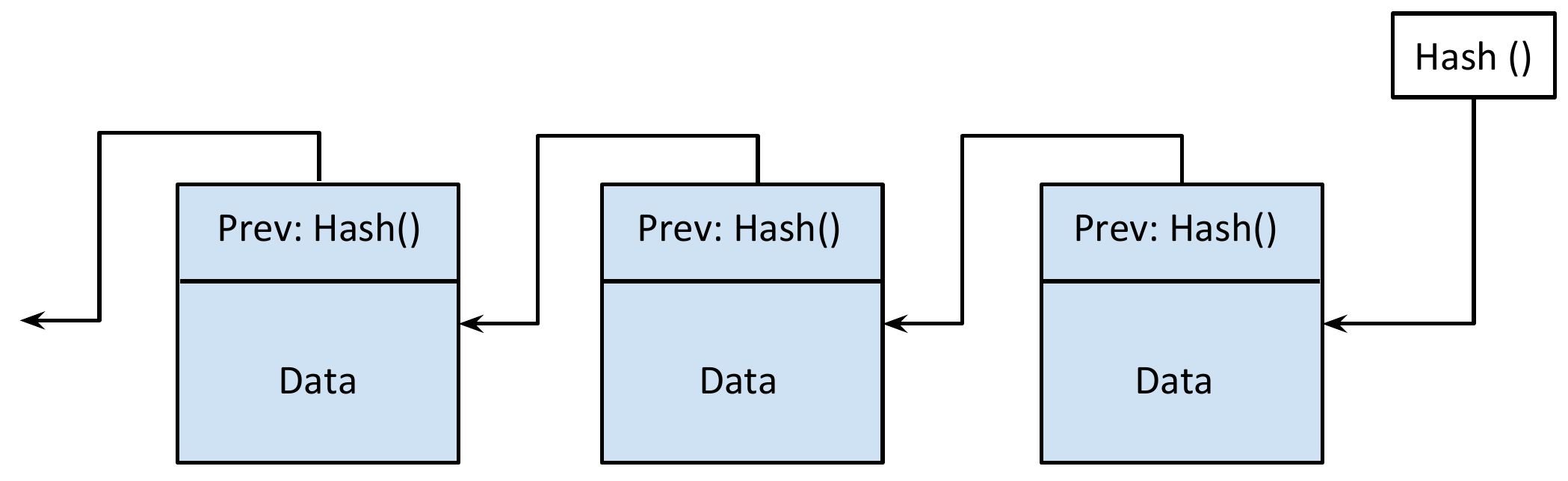}
\end{center}
\vspace{-3mm}
   \caption{The transitivity of the hash function involved in the blockchain.}
\label{fig:thash}
\vspace{-5mm}
\end{figure}

% Traditionally, transactions in a block are stored in the form of Merkle trees. In the tree, the modification of data at the leaves alters the hash of its parent node and ultimately the root. The addition of a Merkle tree with transitive hash essentially makes the complete blockchain.

% \begin{figure}[t]
% \begin{center}
% %\fbox{\rule{0pt}{2in} \rule{0.9\linewidth}{0pt}}
%   \includegraphics[width=1.0\linewidth]{}
% \end{center}
%   \caption{The transitivity of the hash function involved in the blockchain through Merkle tree.}
% \label{fig:merkeltree}
% \end{figure}

The final primitive of the blockchain is the digital signature. In the digital signature, the owner encrypts the data using the private key and can only be validated through the corresponding public key. The authenticated person is held responsible for signing the data using the private key. Hard to counterfeit property of digital signature makes it fault-tolerant.
%-------------------------------------------------------------------------
\begin{figure*}[t]
\begin{center}
%\fbox{\rule{0pt}{2in} \rule{0.9\linewidth}{0pt}}
   \includegraphics[width=1.0\linewidth]{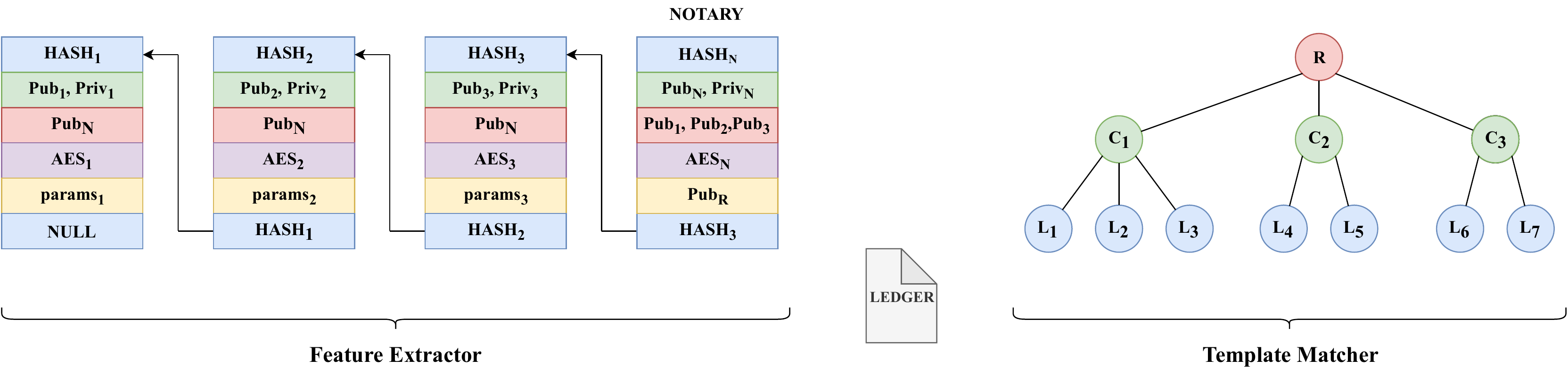}
\end{center}
\vspace{-3mm}
   \caption{Proposed biometrics recognition architecture with advanced features of blockchain.}
\label{fig:block}
\vspace{-3mm}
\end{figure*}

% \begin{figure}[t]
% \begin{center}
% %\fbox{\rule{0pt}{2in} \rule{0.9\linewidth}{0pt}}
%   \includegraphics[width=0.25\linewidth]{}
% \end{center}
%   \caption{A block in the feature extraction blockchain.}
% \label{fig:arch}
% \end{figure}

\section{Architecture of the Proposed Approach}
In this section, we propose an amalgamation of a private blockchain with the biometrics recognition system. 
The proposed architecture is shown in Figure \ref{fig:block} and has the following components:

\subsection{Feature Extractor}

Feature extractor extracts the matchable template from the supplied image. Each step of the feature extraction process is emulated as a single block of a private blockchain. The blocks in the chain may or may not follow the proper sequential order. The chain also has a notary block at the end which helps maintain sanity in the network. 

$i^{th}$ block of this chain contains the following:
\begin{itemize}
    \item $\mathbf{HASH_i : }$ Hash of the $i^{th}$ block
    \item $\mathbf{Pub_i, Priv_i : }$ Public and Private keys of the $i^{th}$ block
    \item $\mathbf{Pub_N : }$ Public key of the notary
    \item $\mathbf{AES_i : }$ AES key of the block
    \item $\mathbf{params_i : }$ Parameters required to complete the processing step
    \item $\mathbf{HASH_{i-1} : }$Hash of the previous block
    
\end{itemize}

Hash of any block is a function of the hash of the previous block and the parameters of the tasks performed by the current block. 

\begin{equation}
    HASH_i = \Phi(HASH_{i-1},params_i)
\end{equation}
Here $\Phi$ is any appropriate cryptographic hash function.

The feature extractor can be seen as a typical convolutional neural network (CNN) where each layer can be modeled as a block of the blockchain system \cite{akhilcvprw}.

\subsubsection{Notary Block}

Notary block is the last block of the feature extraction chain. It serves the following purposes: (i) it ensures that the information passes through the proper route and (ii) it makes sure that the data in the previous blocks has not been tampered. Notary block does not have any parameters of its own, and hence its hash is only a function of the hash of the previous block.

\begin{equation}
    HASH_N = \Phi(HASH_{N-1})
\end{equation}

Apart from its own AES, public and private keys, the block also holds a list of the public keys of all the blocks in the network in the proper sequential order and the public key of the root of the template matcher tree. 

\subsection{Template Matcher}

% \begin{figure}[t]
% \begin{center}
% %\fbox{\rule{0pt}{2in} \rule{0.9\linewidth}{0pt}}
%   \includegraphics[width=0.30\linewidth]{}
% \end{center}
%   \caption{Leaf node of the Template Matcher tree}
% \label{fig:leaf}
% \end{figure}

Template matcher matches the extracted features from the provided sample with the gallery of templates and returns the decision of the biometrics recognition process. The matcher is mirrored as a tree where just like a Merkle tree, the hash of the parent is the sum of the hash of all its children. The tree has the following levels:

\begin{itemize}
    \item Root level: This level contains only a single block called the root block. This block is responsible for declaring the final decision of the recognition process.
    
    \item Chief level: Blocks in this level consult with the leaves to reach a decision consensus and then send this decision to the root block.
    
    \item Leaf level: Blocks in this level contain the templates of the gallery and perform the actual template matching process. They share their results with their corresponding chief node to reach a final consensus decision.
\end{itemize}

Each root-chief link corresponds to an asymmetric decision key pair. The root divides the private key of the pair into $2*n+1$ shards using the Shamir's \cite{shamir1979share} secret sharing principle\footnote{Here $n$ is the maximum number of leaf nodes allowed.} A total of $n+2$ shards are required to reconstruct the key. The root keeps $n$ of these shards with itself and gives the remaining of the shards to the chief. The chief provides each leaf block with a single shard and keeps remaining of the shards with itself.

%\begin{equation}
 %   HASH_i = \Phi(template_i)
%\end{equation}

As explained above, hash of chief blocks would be the sum of hashes of all its leaves and hash of the root block would be the sum of hashes of all the chiefs. The root and the chief blocks also hold a copy of the original hash values of their children. This helps in tracking the faulty leaf in case of a tampering attack on the templates.

\section{Working of the Proposed Approach}

In this section we describe the working of both the feature extractor and the template matcher.

\subsection{Feature Extractor}

The captured data from the sensor is encrypted using the public key of the notary and updated on the ledger. Ledger after the update is represented as follows:

$$(encrypt_{Pub_N}(captured\_data))$$

The notary recognises an update of this sort as the start of a new query cycle. It decrypts the encrypted data using its own private key and encrypts it back using its AES key. 

\begin{equation}
    X_0=decrypt_{Priv_N}(encrypt_{Pub_N}(captured\_data))
\end{equation}

\begin{equation}
    ED_1=encrypt_{AES_N}(X_0)
\end{equation}

In order to allow only the next sequential block to be able to decrypt this value, the notary encrypts its own AES key using the public key of the next in line block.

Suppose the next step (first step in this case) in the extraction pipeline corresponds to block $i$, then:

\begin{equation}
    EK_1=encrypt_{Pub_i}(AES_N)
\end{equation}

It also encrypts a \textit{start} message using the public key of the next block to let it know that the update is meant for him. 
\begin{equation}
    EM_1=encrypt_{Pub_i}(start\_message)
\end{equation}

Lastly, the notary produces its own digital signature by signing a universal message using its own private key.

\begin{equation}
    S_1=encrypt_{Priv_N}(message)
\end{equation}

Notary updates the ledger with $(ED_1,EK_1,EM_1,S_1)$. After the ledger update, all the blocks in the chain verify if it is their turn to act. They do this by verifying the encrypted start message.

Each block $j$ in the chain asserts the following:

\begin{equation}
    assert(decrypt_{Priv_j}(EM_1)==start\_message)
\end{equation}

This assertion is true only for the $i^{th}$ (the intended) block. After the start message verification, the $i^{th}$ block verifies whether the update has been made by the notary. The block does not proceed further if this signature verification fails.

\begin{equation}
    assert(decrypt_{Pub_N}(S_1)==message)
\end{equation}

The above two assertions ensure the following:

\begin{itemize}
    \item Last ledger update was made by the notary.
    \item Notary has directed the $i^{th}$ block to continue with the computation.  
\end{itemize}

After successfully completing the processing steps, let the output produced be $O_1$. Ledger update by block $i$ corresponds to the following set of equations:

$$ED_2=encrypt_{AES_i}(O_1)$$
$$EK_2=encrypt_{Pub_N}(AES_i)$$
$$EM_2=encrypt_{Pub_N}(start\_message)$$
$$S_2=encrypt_{Priv_i}(message)$$

The notary picks up this update and proceeds in a similar approach. This cycle between the blocks and the notary continues until the final feature vector is produced. In this case, instead of using the public key of any of the blocks of the extractor chain, the notary updates the ledger using the public key of the template matching tree.

\subsection{Template Matching}

After successful feature vector computation, the root of the matching tree receives the encrypted version of the template. It decrypts it and forwards it to all of its chief blocks. The chief blocks further send it to all of their leaf blocks. Leaf blocks decrypt the feature vector and compare it with the stored template using an appropriate measure such as the Euclidean distance or the cosine similarity. Each leaf block shares their score with their chief block. The chief block analyses the scores of all the blocks and prepares a decision document with the identity and the rating of the template with the maximum match (e.g. minimum Euclidean distance score) to share with the root. Before sharing with the root, the chiefs share the document with the leaves for their consent. Leaves check if the distance score in the document is less than their score. If the score in the document is indeed less than their score, then from their perspective the document is proper. Leaves consent by providing the chief with their shard of the decision key. If however, the score in the decision document is greater than the score of the leaf block, then that particular block does not consent and turns its flag `ON'. After consenting with the leaves, the chief adds its shards to the pool and provides the final decision and the collected shard pool to the root. If the root reconstructs the private decision key by adding one more shard to the pool, then consensus has been reached. However, if that is not the case, the root checks the score of all the leaves with an `ON' flag and compares it with the score in the decision document to get the template with maximum similarity from that particular path. After receiving and validating the most similar templates from each of the chiefs, the root decides the match by choosing the most similar template. This distributed system provides transparency in the decision-making process.

\section{Security Analysis}
In this section, we discuss the security mechanism of the proposed architecture.

\begin{prop}
\label{prop:prop1}
A compromised chief node in the template matching tree can never receive consensus while drafting a decision document.

\end{prop}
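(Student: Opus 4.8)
The plan is to reduce the event ``the chief's decision document receives consensus'' to a purely combinatorial condition on Shamir shards, and then to show that a fraudulent document cannot meet that condition because at least one leaf withholds its shard. Throughout, let the chief govern $k \le n$ leaves, let $s_j$ denote the genuine matching score (say, the Euclidean distance) computed by leaf $j$, and let $d^{*}=\min_{j} s_{j}$ be the true best match that an honest chief would report.

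First I would pin down the shard arithmetic. By the secret-sharing setup, the private decision key is split into $2n+1$ shards with reconstruction threshold $n+2$; the root retains $n$, the chief receives $n+1$, hands one shard to each of its $k$ leaves, and keeps $n+1-k$. When the chief assembles the pool it can contribute only its own $n+1-k$ shards plus the one shard supplied by each \emph{consenting} leaf, after which the root adds a single shard of its own. Writing $c$ for the number of consenting leaves, the pool handed to the root has size $(n+1-k)+c+1$, which meets the threshold $n+2$ exactly when $c \ge k$. Since shards held by distinct blocks are distinct Shamir points, the conclusion is clean: consensus is reached \textbf{iff every leaf under the chief consents}.

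Next I would characterize consent. By the matching protocol, leaf $j$ releases its shard precisely when the score $d$ written in the document does not exceed its own score, i.e. $d \le s_j$. Combined with the counting step, the document is accepted iff $d \le s_j$ for all $j$, i.e. iff $d \le d^{*}$. An honest document reports $d=d^{*}$ together with the identity attaining it, so all leaves (including the minimizing leaf, for which equality holds) consent and consensus follows. The substance of the proposition is the contrapositive: a compromised chief, by definition, submits a document whose (identity, score) pair deviates from this honest one. If it over-states the score, $d > d^{*}$, then the leaf achieving $d^{*}$ sees $s_j < d$, withholds its shard, and flips its flag, so by the counting step consensus is impossible; the same veto rules out naming any alternative gallery identity, whose genuine distance necessarily exceeds $d^{*}$.

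The hard part will be the remaining escape route: a chief that fabricates a score \emph{smaller} than every leaf's, $d < d^{*}$, formally satisfies $d \le s_j$ for all $j$ and would collect every shard. To close this gap I would invoke the consistency check each leaf performs on its \emph{own} template---the leaf named in the document must verify that the reported score equals the score it actually computed, vetoing otherwise---so that no score inconsistent with a real gallery template can survive. Making this precise, and handling score ties together with the $k<n$ boundary so that the threshold inequality stays tight, is where the argument needs the most care; once it is in place, every deviation from the honest document is vetoed by some leaf, and by the shard count consensus can never be granted to a compromised chief.
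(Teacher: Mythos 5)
Your shard arithmetic and veto argument are exactly the paper's proof, only executed more carefully: the paper likewise counts the threshold of $n+2$ shards against the $n+1$ held collectively by the chief and its leaves plus the root's single shard, asserts that a faulty document leaves ``at least one leaf block who would have received a distance score less than the one mentioned in the document,'' concludes the chief tops out at $n$ shards, and notes the root's one shard cannot then complete the key. Your generalization to $k \le n$ leaves and the clean equivalence ``consensus holds iff every leaf consents'' (pool size $(n+1-k)+c+1 \ge n+2$ iff $c \ge k$) is a tightening the paper does not bother with, but it is the same decomposition and the same key mechanism.

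The interesting part of your attempt is the ``hard part'' you flagged, because it is a genuine hole and the paper's proof does not close it. The under-bidding chief --- one that writes a fabricated score $d < d^{*} = \min_j s_j$ next to an arbitrary identity --- satisfies the consent rule as the paper actually states it (a leaf consents when the document's score is below its own), so it collects all $k$ shards, the root reconstructs the key, and ``consensus'' is granted to a fraudulent (identity, score) pair. The paper's proof silently assumes every faulty document over-states some leaf's score, which holds only if the chief is constrained to report a score genuinely produced by one of its leaves; nothing in the described protocol enforces that. Your proposed repair --- having the leaf named in the document verify that the reported score equals the one it actually computed --- is sensible, but it is a verification step the paper never specifies, so carrying it out would prove the proposition for a strengthened protocol rather than the one described. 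In short: the portion of your argument you completed coincides with the paper's proof (including its implicit handling of ties, which both treatments leave loose), and the portion you honestly left open is precisely where the paper's own proof is silent.
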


\begin{proof}
A total of $n+2$ shards are required to achieve consensus in the decision-making process. $n+1$ of these shards come from the chief and the leaf blocks. The root provides the remaining $1$ shard. If the decision document is faulty, there would be at least one leaf block who would have received a distance score less than the one mentioned in the document. This block will not consent and will not give its shard to the chief block. Therefore, the maximum number of shards that the chief can collect is $n$. It would require the root to provide an additional shard to construct the decision key and in case of mismatch, consensus cannot be achieved.     
\end{proof}

\begin{itemize}
    \item \textbf{Adversaries try to query/intercept the feature extractor midway:} Several attack techniques have been devised which monitor the functioning of the feature extraction process by analyzing the magnitude and flow of the gradients to construct a sample which would result in model failure. For this, they either need to query or intercept the information flow midway. The proposed model does not allow either of the two. Querying the model midway would result in request rejection since the notary would not have signed the query. Intercepting the information flow midway would not work either since the attacker does not possess the required keys to decrypt the layer output.

    \item \textbf{Parameters of the feature extractor blocks are compromised:} Hash of a feature extractor block depends on both the parameters of the block and the hash of the previous block. If the block parameters get compromised, the hash of the block changes. It triggers a change in the hash of the next block and finally changes the hash of the notary. Hence, a change in the hash of the notary signifies that parameters of any of the block have been compromised. It can be rectified by figuring out the first block whose hash does not match with the hash that it had in the last stable state.
    
    \item \textbf{Templates are compromised:} Hash of the leaves of the template matching tree depend on the templates they hold. A change in them would correspond a change in the hash values which would correspond a change in the hash of the chief and ultimately the root. After detecting a change in the hash of itself, the root scrutinizes by finding the chief node whose hash change triggered its hash change. The faulty chief node repeats the same practice to find out the faulty leaf node. The faulty leaf node is forced to revert to its last stable state by restoring the correct template.
    
    \item \textbf{A chief block in the template matching tree is compromised:} A compromised chief block would try to pass over an incorrect decision document to the root. However, to get it passed, it would require the consensus of all the leaves. Proposition \ref{prop:prop1} shows that it is impossible for such a block to achieve consensus. It triggers scrutiny from the root who identifies the leaf with its flag turned on (most similar template).
    
    \item \textbf{A leaf block is compromised:} All the leaf blocks must consent to receive $n+2$ total shards. A compromised leaf block would not agree to a valid decision document and would trigger scrutiny from the root. The scrutiny would reveal that the decision document was valid and the decision-making process would go on.
\end{itemize}

% Please add the following required packages to your document preamble:
% \usepackage{multirow}
\begin{table*}[t]
\centering
\caption{Rank 1 identification accuracy (\%) before and after template tampering on both the architectures.}
\label{table:table1}
\setkeys{Gin}{keepaspectratio}
\resizebox*{0.95\textwidth}{0.95\textheight} {
\begin{tabular}{|c|c|c|c|c|c|}
\hline
\multirow{3}{*}{\textbf{Dataset}} & \multirow{3}{*}{\textbf{Metric}} & \multicolumn{4}{c|}{\textbf{Rank 1 Accuracy (\%)}} \\ \cline{3-6} 
 &  & \multicolumn{2}{c|}{\textbf{Before Template Tampering}} & \multicolumn{2}{c|}{\textbf{After Template Tampering}} \\ \cline{3-6} 
 &  & \textbf{Traditional Architecture} & \textbf{Proposed Architecture} & \textbf{Traditional Architecture} & \textbf{Proposed Architecture} \\ \hline
\textbf{MultiPIE} & \textbf{Cosine} & 90.79 & 90.79 & 3.44 & 90.79 \\ \cline{2-6}  \hline
%  & \multicolumn{1}{l|}{\textbf{Euclidean Distance}} & 89.60 & 89.60 & 10.72 & 89.60 \\ \hline
\textbf{MEDS} & \textbf{Euclidean} & 70.37 & 70.37 & 0 & 70.37 \\ \hline
\textbf{CASIA} & \textbf{Cosine} & 95.39 & 95.39 & 70.81 & 95.39 \\ \cline{2-6} \hline
%  & \textbf{Euclidean Distance} & 95.38 & 95.38 & 78.98 & 95.38 \\ \hline
\end{tabular}%
}\\
Results with best performing distance metric is reported.
\vspace{-5mm}
\end{table*}

\begin{comment}

\begin{table}[t]
\centering
\caption{Rank 1 face identification accuracy (\%) before and after template tampering on both the architectures. }
\label{table:table1}
\begin{tabular}{|c|c|c|c|c|c|}
\hline
\multirow{3}{*}{\textbf{Dataset}} & \multirow{3}{*}{\textbf{Distance Metric}} & \multicolumn{4}{c|}{\textbf{Rank 1 Accuracy (\%)}} \\ \cline{3-6} 
 &  & \multicolumn{2}{c|}{\textbf{Before Template Tampering}} & \multicolumn{2}{c|}{\textbf{After Template Tampering}} \\ \cline{3-6} 
 &  & \textbf{Traditional Architecture} & \textbf{Proposed Architecture} & \textbf{Traditional Architecture} & \textbf{Proposed Architecture} \\ \hline
\multirow{2}{*}{\textbf{MultiPIE}} & \textbf{Cosine Distance} & 90.79 & 90.79 & 3.44 & 90.79 \\ \cline{2-6} 
 & \multicolumn{1}{l|}{\textbf{Euclidean Distance}} & 89.60 & 89.60 & 10.72 & 89.60 \\ \hline
\multirow{2}{*}{\textbf{MEDS}} & \textbf{Cosine Distance} & 74.07 & 74.07 & 0 & 74.07 \\ \cline{2-6} 
 & \textbf{Euclidean Distance} & 70.37 & 70.37 & 0 & 70.37 \\ \hline
\multirow{2}{*}{\textbf{CASIA}} & \textbf{Cosine Distance} & 95.39 & 95.39 & 70.81 & 95.39 \\ \cline{2-6} 
 & \textbf{Euclidean Distance} & 95.38 & 95.38 & 78.98 & 95.38 \\ \hline
\end{tabular}%
\end{table}

\end{comment}

\section{Experiments}

% \begin{figure*}[t]
% \begin{center}
% %\fbox{\rule{0pt}{2in} \rule{0.9\linewidth}{0pt}}
%   \includegraphics[width=1.0\linewidth]{}
% \end{center}
%   \caption{Results of template matching process using the proposed architecture. The accuracy of original and after template matching exactly coincides (i.e., there is no drop in the accuracy due to proposed blockchain defense). }
% \label{fig:newgraphs}
% \end{figure*}

Experiments are performed to present the efficacy and security of the proposed model using face and fingerprint modalities. The results of face identification are reported using a subset of CMU Multi-PIE Face Database \cite{gross2010multi} and Multiple  Encounters Database (MEDS-II) database \cite{meds}. Multi-PIE database is a collection of 750,000 images of 337 subjects captured under 15 viewpoints and 19 illumination conditions in four different recording sessions. In this research, the subset of the database has been used with 50248 frontal images of all the subjects. The subset contains at least 40 frontal images for each identity. MEDS database is collected from 518 subjects and contains 1,309 frontal images of each subject. The results of fingerprint identification are reported on a subset of CASIA fingerprint database\footnote{http://biometrics.idealtest.org/index.jsp}. CASIA fingerprint database contains 500 fingerprints of each of the 4000 subjects. The subset considered for the experiments consists of 100 subjects with 10 fingerprints each. 

We use a pre-trained CNN network, VGG-Face model \cite{vggface}, as the feature extractor and model each of its layer as a block of the chain. The parameters of the block include the kernels, biases and the activation function for the convolution layers, pool size for the max-pooling layers and weights, biases and activation function for the dense layers. We use the python implementation of Shamir's secret sharing\footnote{https://github.com/lamby/python-gfshare} to implement key sharding.

 Each of the dataset is divided into training and testing sets in a 4:1 ratio using stratified random sampling. The VGG network is fine-tuned by freezing the weights of the first nineteen layers and re-learning the weights of the remaining layers of the model. We chose the fully connected dense layer with 4096 units as our template vector. For $n$ subjects, the constructed template matcher tree has $\ceil*{n//50}$\footnote{$//$ refers to integer division and $\%$ refers to remainder operation.} chief blocks. Each but the last block has 50 leaves and the last chief block has 50 leaves if $\ceil*{n//50}$ equaled $n//50$ and $n\%50$ leaves otherwise.
The gallery is constructed by choosing a random image of each subject and then by passing it through the trained model to generate the template. Each of the gallery templates is stored in a leaf of the template matcher tree. In our implementation, each leaf holds just one template corresponding to a particular subject in the dataset. Face and fingerprint identification is performed by matching the gallery and probe templates using the following distance metrics: (i) Cosine similarity and (ii) Euclidean distance.

To show the security and self-correcting nature of the system, the templates are perturbed using Gaussian noise. The effect of this perturbation is measured on both the traditional and the proposed architectures. Figure \ref{fig:graphsmeds} shows the cumulative match characteristics curves (CMC) before and after template tampering on both the architectures on the MEDS database. It is evident that template tampering reduces the performance of traditional designs significantly, whereas it does not affect the proposed model. Further, Table \ref{table:table1} shows the rank-1 identification accuracies on all three databases using traditional and proposed biometrics recognition system. On the Multi-PIE database, when the cosine distance measure is used for identifying, both the traditional and proposed architecture yield $90.79$\% accuracy. The advantage of the combination of blockchain in biometrics system can be seen when the tampering is performed. In the traditional network case, the model suffers more than $87$\% drop in the identification accuracy. On the other hand, due to the property of blockchain, the model and template are protected against tampering and same identification accuracy is maintained.

Similarly, on the MEDS face database and the CASIA fingerprint database, the traditional architectures yield $70.37$\% (euclidean) and 95.39\% (cosine) rank-1 identification accuracies, respectively. Due to the lack of any security features in the traditional architectures, the models suffer a huge performance drop after tampering. However, the proposed blockchain incorporated CNN models retain the original identification accuracies in both the case studies.

% \begin{figure*}[t]
% \begin{center}
% %\fbox{\rule{0pt}{2in} \rule{0.9\linewidth}{0pt}}
%   \includegraphics[width=1.0\linewidth]{}
% \end{center}
%   \caption{Results of face identification on MutiPIE dataset.}
% \label{fig:graphs}
% \end{figure*}

\section{Time Complexity}
In a distributed framework, the time needed by the template matcher will be: Time to delegate to the leaves + Time to conduct a template match + Time to compare $n$ values + Time taken to carry out Shamir’s secret ($x$) + Time taken to compare $c$ values ($n$: number of leaf nodes and $c$: number of chief nodes). Assuming constant time to delegate to the leaves, dimensions of the template as $a\times{b}$, and Euclidean distance as the matching score, the overall complexity is: O($ab$) + O($n$) + O($x$) + O($c$). The power consumption of the proposed system is similar to the traditional system.

\begin{figure}[]
\begin{center}
%\fbox{\rule{0pt}{2in} \rule{0.9\linewidth}{0pt}}
   \includegraphics[width=0.9\linewidth]{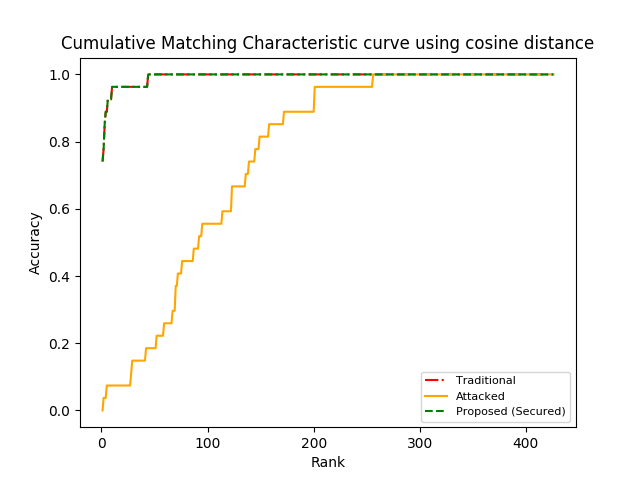}
\end{center}
\vspace{-3mm}
  \caption{Results of face identification(\%) on the MEDS dataset.}
\label{fig:graphsmeds}
\vspace{-3mm}
\end{figure}

\section{Conclusion}

In a ``security-trust model'', it is easier to trust a community, where a decision is made only when a majority of the community agrees, than to trust a particular individual. It is one of the motivations behind using a distributed framework for biometric template matching. 
% In this research, we encrypted the templates of each enrolled user of the system, and the judgment is made when the query matches the actual user. The alteration or tampering in any of the components (i.e., templates and parameters) can lead to an alteration of the cryptographic hash which can be identified in such a distributed blockchain environment. 
Inspired from this, in this research, we have developed a self-correcting, template and parameter tamper-proof blockchain architecture for biometrics recognition. The proposed deep learning model is able to protect different stages of the biometrics recognition pipeline, specifically feature extraction, matching, and template storage. Experiments on face and fingerprint modalities showcase the effectiveness of the proposed approach. One of the key limitations of the proposed model is computation time because cryptographic computations, especially symmetric key encryption, and decryption are very demanding operations. They provide undeniable security but take some time to compute. As an extension to this work, we plan to develop a distributed framework which is not very resource hungry and has a time complexity similar to the traditional architecture.

%We encrypted the templates of each enrolled user of the system, and this system makes a decision only when the query matches the actual user. The alteration or tampering in any of the components (i.e., templates and parameters) can lead to an alteration of the cryptographic hash which can be identified in such a distributed blockchain environment. 

\section{Acknowledgement}

A. Agarwal is partly supported by Visvesvaraya PhD Fellowship. M. Vatsa and R. Singh are partly supported from the Infosys Center for AI at IIIT-Delhi. M. Vatsa is also partially supported by the Department of Science and Technology, Government of India through the Swarnajayanti Fellowship.

{\small
\bibliographystyle{ieee}
\bibliography{0051}
}

\end{document}